\documentclass[conference]{IEEEtran}

\usepackage[utf8]{inputenc} 
\usepackage[T1]{fontenc}
\usepackage{amsmath,amssymb}

\usepackage{url}
\usepackage{tcolorbox}
\usepackage{cite}
\usepackage[hang,flushmargin]{footmisc}

\usepackage{soul,times,amsfonts,amssymb,amsthm,mathrsfs,xcolor,graphicx,enumitem,booktabs,euscript,mathtools,nicefrac,float,graphicx,enumitem}
\usepackage{cite}
\usepackage[font=footnotesize]{subcaption}
\usepackage[draft,unicode,bookmarks=false]{hyperref}
\hypersetup{
    colorlinks,
    linkcolor={blue!80!black},
    citecolor={blue!80!black},
    urlcolor={blue!80!black}
}
\usepackage[capitalize]{cleveref}
\newcommand\redout{\bgroup\markoverwith{\textcolor{red}{\rule[0.5ex]{2pt}{0.8pt}}}\ULon}

\newtheorem{theorem}{Theorem}[section]
\newtheorem{lemma}[theorem]{Lemma}

\newtheorem{corollary}[theorem]{Corollary}

\theoremstyle{remark}
\newtheorem{remark}{Remark}
\newtheorem{definition}{Definition}[section]

\newcommand\nc\newcommand
\nc\bfa{{\boldsymbol a}}\nc\bfA{{\boldsymbol A}}\nc\sA{{\EuScript A}}\nc\cA{{\mathcal A}}
\nc\bfb{{\boldsymbol b}}\nc\bfB{{\boldsymbol B}}\nc\cB{{\mathcal B}}\nc\sB{{\EuScript B}}
\nc\bfc{{\boldsymbol c}}\nc\bfC{{\boldsymbol C}}\nc\cC{{\mathscr C}}
\nc\bfd{{\boldsymbol d}}\nc\bfD{{\boldsymbol D}}\nc\cD{{\mathscr D}}
\nc\bfe{{\boldsymbol e}}\nc\bfE{{\boldsymbol E}}\nc\cE{{\mathcal E}}\nc\sE{{\mathscr E}}
\nc\bff{{\boldsymbol f}}\nc\bfF{{\boldsymbol F}}\nc\cF{{\mathscr F}}
\nc\bfg{{\boldsymbol g}}\nc\bfG{{\boldsymbol G}}\nc\cG{{\EuScript G}}\nc\sG{{\mathscr G}}
\nc\bfh{{\boldsymbol h}}\nc\bfH{{\boldsymbol H}}\nc\cH{{\mathcal H}}\nc\sH{{\mathscr H}}
\nc\bfi{{\boldsymbol i}}\nc\bfI{{\boldsymbol I}}\nc\cI{{\EuScript I}}\nc\sI{{\mathscr I}}
\nc\bfj{{\boldsymbol j}}\nc\bfJ{{\boldsymbol J}}\nc\cJ{{\EuScript J}}
\nc\bfk{{\boldsymbol k}}\nc\bfK{{\boldsymbol K}}\nc\cK{{\EuScript K}}
\nc\bfl{{\boldsymbol l}}\nc\bfL{{\boldsymbol L}}\nc\cL{{\EuScript L}}
\nc\bfm{{\boldsymbol m}}\nc\bfM{{\boldsymbol M}}\nc\cM{{\EuScript M}}\nc\sM{{\mathscr M}}
\nc\bfn{{\boldsymbol n}}\nc\bfN{{\boldsymbol N}}\nc\cN{{\EuScript N}}
\nc\bfo{{\boldsymbol o}}\nc\bfO{{\boldsymbol O}}\nc\cO{{\EuScript O}}\nc\sO{{\mathscr O}}
\nc\bfp{{\boldsymbol p}}\nc\bfP{{\boldsymbol P}}\nc\cP{{\EuScript P}}\nc\sP{{\mathscr P}}
\nc\bfq{{\boldsymbol q}}\nc\bfQ{{\boldsymbol Q}}\nc\cQ{{\mathcal Q}}
\nc\bfr{{\boldsymbol r}}\nc\bfR{{\boldsymbol R}}\nc\cR{{\EuScript R}}\nc\sR{{\mathscr R}}
\nc\bfs{{\boldsymbol s}}\nc\bfS{{\boldsymbol S}}\nc\cS{{\EuScript S}}
\nc\bft{{\boldsymbol t}}\nc\bfT{{\boldsymbol T}}\nc\cT{{\EuScript T}}\nc\sT{{\mathscr T}}
\nc\bfu{{\boldsymbol u}}\nc\bfU{{\boldsymbol U}}\nc\cU{{\EuScript U}}
\nc\bfv{{\boldsymbol v}}\nc\bfV{{\boldsymbol V}}\nc\cV{{\mathscr V}}
\nc\bfw{{\boldsymbol w}}\nc\bfW{{\boldsymbol W}}\nc\cW{{\mathscr W}}\nc\sW{{\mathscr W}}
\nc\bfx{{\boldsymbol x}}\nc\bfX{{\boldsymbol X}}\nc\cX{{\EuScript X}}
\nc\bfy{{\boldsymbol y}}\nc\bfY{{\boldsymbol Y}}\nc\cY{{\mathscr Y}}
\nc\bfz{{\boldsymbol z}}\nc\bfZ{{\boldsymbol Z}}\nc\cZ{{\EuScript Z}}
\usepackage{dsfont}
\newcommand{\1}{\mathds{1}}
\nc{\remove}[1]{}

\allowdisplaybreaks

\DeclareSymbolFont{bbold}{U}{bbold}{m}{n}
\DeclareSymbolFontAlphabet{\mathbbold}{bbold}

\DeclareMathOperator{\qsc}{{\text{$q$}}SC}
\DeclareMathOperator{\qec}{{\text{$q$}}EC}

\newcommand{\R}{{\mathbb R}}

\newcommand{\Z}{{\mathbb Z}}

\nc\torus{{\mathbb{T}}}
\nc\reals{{\mathbb R}}
\nc{\ff}{{\mathbb F}}
\nc{\PP}{{\mathbb P}}
\nc{\complex}{{\mathbb C}}

\nc\inftyeq{\overset{\infty}{=}}

\usepackage{tikz}
\usetikzlibrary{positioning}
\usetikzlibrary{calc}
\usetikzlibrary{decorations.pathreplacing}

\definecolor{newred}{HTML}{ff382e}
\definecolor{newgreen}{HTML}{549641}
\definecolor{newblue}{HTML}{4c4cfc}
\definecolor{neworange}{HTML}{c98702}

\nc\Renyi{R{\'e}nyi }
\usepackage[capitalize]{cleveref}
\begin{document}

\title{Universally optimal (wiretap) codes}

\author{%
   \IEEEauthorblockN{{\sc Madhura Pathegama}  \hspace*{.8in}   {\sc Alexander Barg} \\
   \hspace*{.5in} Georgia Tech., Atlanta, GA, USA \hspace*{.3in} U. of Maryland, College Park, USA    \\ \hspace*{.3in} \{pankajap,abarg\}@umd.edu 
}
 }

\maketitle



 \setlength{\abovedisplayskip}{1pt}
 \setlength{\belowdisplayskip}{1pt}

\begin{abstract} 
Universally optimal (UO) codes were introduced by H. Cohn and A. Kumar in 2007 and later extended to the discrete setting by Cohn and Y. Zhao. They minimize the ``energy'' among all codes of the same size for a certain class of potential functions. So far only a small number of specific functionals have been linked to information theory problems. We add one more, showing that UO codes optimize $\alpha$-mutual information for $\alpha=2$ in the context of wiretap channels with a noiseless main channel. This implies that UO codes are optimal for transmission over this type of wiretap channels.
\end{abstract}

\section{Introduction}
Wyner’s wiretap channel model involves three terminals, a transmitter $A$, a legitimate receiver $B$, and an eavesdropper $E$ \cite{wyner1975wire}. The goal of $A$ is to communicate a uniformly distributed message
$M$ drawn from a finite set $\sM$ to $B$ over a channel
$\sW_B$, while $E$ observes this transmission on the output of another channel, $\sW_E$.

We assume that $M \in \sM$ is encoded into a subset of ${\Z}_q^n$, and a random
element, $X$, from this subset is transmitted over $n$ uses of the channel $\sW_B$.
The legitimate receiver observes an output sequence $Y$, while the eavesdropper
observes a sequence $Z$ at the output of $\sW_E$.
The encoder is designed to ensure reliable communication to $B$ while
simultaneously limiting the information revealed to $E$.

A standard and widely used approach for achieving both reliability and secrecy
is the use of coset coding schemes \cite{wyner1975wire}. Such schemes rely on a pair of additive codes
$C_E \subseteq  C_B \subseteq  {\Z}_q^n$, where the message set $\sM$ is identified with the
quotient $C_B / C_E$. Thus, each message corresponds to a coset of $C_E$ in
$C_B$, and the transmitted sequence is chosen uniformly at random from this coset.
This randomization is crucial for reducing the statistical dependence between
the transmitted message and eavesdropper’s observation.

In this work, we focus on a special but important case in which the main channel
$\sW_B$ is noiseless \cite{subramanian2011strong}, \cite{mahdavifar2011achieving}; see \cref{fig:model}.
In this setting, reliable communication is immediate, and the code $C_B$
may be taken to be the entire space ${\Z}_q^n$.
The primary design objective then becomes secrecy: information about $M$ acquired by $E$ by observing $Z$
must be minimized. While most published works on the wiretap channel considered the asymptotic version of the transmission problem for $n\to\infty$, here we focus on the finite-length case and strict optimality.


\begin{figure}[t]
\centering
\begin{tikzpicture}[>=latex,scale=1.0]

\node[draw, minimum width=1.8cm, minimum height=1cm] (enc) at (-.3,0) {Encoder};
\node (msg) at (-2.2,0) {$M$};
\node[draw, minimum width=1.7cm, minimum height=1cm] (main) at (3,1.0) {$\sW_B$};
\node[draw, minimum width=1.7cm, minimum height=1cm] (evech) at (3,-1.0) {$\sW_E$};
\node (rx) at (5.3,1.0) {$Y=X$};
\node (eve) at (5.5,-1.0) {$Z$};

\draw[->] (msg) -- (enc);
\draw[->] (enc) -- node[above] {$X$} (main);
\draw[->] (enc) -- node[below] {$X$} (evech);

\draw[->] (main) -- (rx);
\draw[->] (evech) -- (eve);

\node at (3.2,-2.0) {$\qsc(p)$ or $\qec(\epsilon)$};

\end{tikzpicture}
\caption{The wiretap channel}
\label{fig:model}
\end{figure}

An interesting question raised in \cite{swain2025best} is whether there exist coset
coding schemes that simultaneously minimize the mutual information $I(M;Z)$ among all codes of the same size and for all
memoryless erasure channels. It was shown that certain classical code families—such as Hamming codes, Hadamard codes, and first-order Reed–Muller codes—satisfy this universality property. In this paper, we address a related but slightly different question.
Rather than focusing on Shannon mutual information, we seek codes that universally minimize a range of dependence measures based on Rényi entropy and Rényi divergence across broad classes of eavesdropper channels. Rényi-based leakage measures, including Rényi, Sibson, and Arimoto mutual information, play a central role in modern analyses of information-theoretic security. These measures provide stronger notions of secrecy than Shannon mutual information  \cite{iwamoto2014information,yu2018renyi,pathegama2023smoothing} and are closely tied to operational tasks such as hypothesis testing \cite{shayevitz2011renyi,sason2017arimoto} and worst-case distinguishability \cite{alimomeni2012guessing,issa2016operational}.

In this work, we study wiretap codes that minimize the Rényi mutual information,
Sibson mutual information, and Arimoto mutual information of order $\alpha=2$
for all memoryless $q$-ary symmetric channels ($\qsc$'s) and $q$-ary erasure channels at the
eavesdropper.

{\em Universally optimal {\rm (UO)} codes} were originally introduced in the study of energy minimization problems 
\cite{cohn2007universally,cohn2014energy}, \cite{boyvalenkov2016universal}. Their links to information theory were limited to minimizing
the probability of undetected error over a $\qsc$ \cite{ashikhmin1999binomial}. 
Here we identify another information-theoretic motivation for this code family, showing that they simultaneously minimize all the dependence measures mentioned above for $\alpha=2$. For the $\qsc$, this universality also extends to the case
$\alpha = \infty$.

\vspace*{-.05in}
\section{Preliminaries}

\subsection{Codes and channels}
\vspace*{-.05in}
A code $C\subseteq \Z_q^n$ is called {\em additive} if it forms an additive subgroup of $(\Z_q^n)^+$. 
We consider transmission over the wiretap channel as described above using the coset coding scheme that relies on additive codes. Two types of memoryless wiretapper's channels $\sW_E$ with input alphabet $\Z_q$ are considered: the $\qsc(p)$ channel $\Z_q\to\Z_q$ defined as
    \begin{align*}
     \qsc(p): P(b|a)&=(1-p)\delta_{b,a}+\frac p{q-1}(1-\delta_{b,a})
     \end{align*}
and channels with erasures $\Z_q\to (\Z_q\cup ?)$, defined in Sec.~\ref{sec: qEC}.

Since the action of $\sW_E$ does not depend on the input, the probability of sending $x\in \Z_q^n$ and receiving $z\in \Z_q^n$ for $\qsc(p)$ depends only on the Hamming distance $d(z,x)$. Letting $u=z-x$, we can write $P(u)$ as a {\em radial function}
   $$
   \beta_p^{q,n}(u)=\beta_p^{q,n}(|u|)=(1-p)^{n-|u|}\Big(\frac p{q-1}\Big)^{|u|},
   $$
where $|u|$ denotes the Hamming weight. We may omit the mention of $q$ and $n$ if they are clear from the context. The uniform distribution over the code is denoted $P_C$ below.

\subsection{UO Codes and Completely Monotonic Potentials}

Let $C \subseteq {\Z}_q^n$ be a code and let $d(x,y)$ be the
Hamming distance.
Given a (radial) potential function $f : \{0,1,\dots,n\} \to \mathbb{R}$,
the \emph{total potential energy} of $C$ is defined as \cite{yudin1992minimum}
\begin{align}
    V_f(C)
    = \frac{1}{|C|}\sum_{x,y \in C} f\!\big( d(x,y) \big).
\end{align}

A function $f:\{0,\dots,n\} \to \mathbb{R}$ is said to be
\emph{completely monotonic} if its finite differences alternate in sign:
\begin{align}
    (-1)^k \Delta^k f(i) \;\ge\; 0,
    \qquad \forall\, k\ge 0, 0\le i\le n-k
\end{align}
where $\Delta f(i) = f(i+1) - f(i)$ and $\Delta^k f $ denotes the $k$-th finite difference. 

\begin{definition}[\textbf{Universal optimality}]
A code $C \subseteq {\Z}_q^n$ is said to be
\emph{universally optimal} if for every completely monotonic potential $f$
and every code $C'\subseteq \Z_q^n$ with $|C'|=|C|$,
\begin{align}
    V_f(C) \;\le\; V_f(C').
\end{align}
\end{definition}
Thus, a UO code simultaneously minimizes the energy for \emph{all}
admissible potentials determined by the Hamming distance.
Equivalently, using the {\em distance distribution}
$A_i = \frac{1}{|C|}|\{(x,y)\in C^2: d(x,y)=i\}|$, we may write
\begin{align}
    V_f(C)
    = \sum_{i=0}^n A_i\, f(i),
\end{align}
so universal optimality means that $C$ minimizes this linear functional
for every completely monotonic $f$. Classical examples of universally optimal codes include Hamming, simplex, and MDS codes; a list of the known UO codes is provided in Table 1 of \cite{cohn2014energy}.

As shown in \cite{cohn2014energy}, the cone of completely monotonic
functions on $\{0,\dots,n\}$ is generated by
\begin{align}\label{eq: cm_basis}
    f_j(i) = \binom{n-i}{j}, \qquad j=0,\dots,n.
\end{align}
In other words, every completely monotonic function admits a nonnegative expansion
\begin{align*}
    f(i) = \sum_{j=0}^{n-i} c_j \binom{n-i}{j},
    \qquad c_j \ge 0.
\end{align*}
Thus, a code $C$ is universally optimal
if and only if it minimizes the energies corresponding to the potentials $f_j(i)$. The energies $V_{f_j}$ are called binomial moments of the code \cite{ashikhmin1999binomial}, where UO codes were called extremal (no connection to code energy was made there).

An easy calculation confirms that the potential $\beta_p$ is completely monotonic: for all $p \in (0,1-1/q)$,
\begin{align}
        \beta_p(i) = \left(\frac{p}{q-1}\right)^{n}\sum_{j=0}^{n-i} \left(\frac{q(1-p)-1}{p}\right)^{j} \binom{n-i}{j}.\label{eq: beta cm}
    \end{align}
Switching from the radial to a spatial view, note that the convolution $(\beta_p\ast\beta_p)(x), x\in \Z_q^n$  yields a potential
\begin{align}\label{eq:r(p)}
  \beta_{r(p)}, \quad    r(p)= 2p(1-p) + \frac{q-2}{q-1}\,p^2
\end{align}
which is completely monotonic by \eqref{eq: beta cm}. Note further that $r(p)$ represents a cascade of two $\qsc(p)$ channels, and thus is a monotone increasing function of $p\in[0,\frac{q-1}q)$.

\vspace*{-.1in}

\subsection{\Renyi measures of secrecy}

The Shannon mutual information $I(M;Z)$ is a widely used secrecy metric, as it
quantifies the statistical dependence between the message $M$ and the
eavesdropper’s observation $Z$. Going beyond $I(M;Z)$, 
several alternative notions of mutual information can be
defined using Rényi entropy–based quantities, which are often better suited
for capturing worst-case or higher-moment leakage.

We begin by recalling the definition of Rényi divergence.
For $\alpha \in (0,1) \cup (1,\infty)$, the \emph{Rényi divergence} of order $\alpha$
between two probability distributions $P$ and $Q$ on a common alphabet
$\cC$ is defined as
\[
    D_\alpha(P\|Q)
    := \frac{1}{\alpha-1}
       \log \sum_{x \in \cC}
       P(x)^{\alpha} Q(x)^{1-\alpha},
\]
whenever $\operatorname{supp}(P) \subseteq \operatorname{supp}(Q)$. (The base of the logarithm can be any number greater than 1, but for simplicity we set it to $q$.)
The limiting cases $\alpha \to 0$, $\alpha \to 1$, and $\alpha \to \infty$
are well defined and recover, respectively, the min-divergence,
the Kullback--Leibler divergence, and the max-divergence.

Using Rényi divergence, we define two commonly used $\alpha$-parametrized
generalizations of mutual information.
Let $U$ and $V$ be arbitrary random variables.
The \emph{Rényi mutual information} and  \emph{Sibson mutual information} 
of order $\alpha$ \cite{sibson1969information} are defined, respectively, as
\begin{align}
    I_\alpha^{\mathrm{R}}(U;V)
        &:= D_\alpha\!\left(P_{UV}\,\middle\|\,P_U P_V\right), \\
    I_\alpha^{\mathrm{S}}(U;V)
        &:= \min_{Q_V}
            D_\alpha\!\left(P_{UV}\,\middle\|\,P_U Q_V\right),
            \label{eq:Sibson}
\end{align}
where the minimization in \eqref{eq:Sibson} is over all probability
distributions $Q_V$ on the alphabet of $V$.

An explicit expression for the optimizer in \eqref{eq:Sibson} is known and
is given by \cite[Eq.(31)]{esposito2025sibson}
\begin{align}\label{eq:Sibson-opt}
    Q_V^\star(v)
    := \frac{
        \left(\sum_{u} P_U(u)\, P_{V|U}(v|u)^{\alpha}\right)^{1/\alpha}
    }{
        \sum_{v'}
        \left(\sum_{u} P_U(u)\, P_{V|U}(v'|u)^{\alpha}\right)^{1/\alpha}
    }.
\end{align}

Another widely used $\alpha$-parametrized dependence measure is the
\emph{Arimoto mutual information}\cite{arimoto1977information}, defined as
\begin{align}\label{eq:Arimoto}
    I_\alpha^{\mathrm{A}}(U;V)
        := H_\alpha(U) - H_\alpha^{\mathrm{A}}(U|V),
\end{align}
where
\begin{align*}
    H_\alpha(U)
        &:= \frac{1}{1-\alpha}
            \log \sum_{u} P_U(u)^{\alpha}, \\
    H_\alpha^{\mathrm{A}}(U|V)
        &:= \frac{\alpha}{1-\alpha}
            \log \sum_{v}
            \biggl(\sum_{u} P_{UV}(u,v)^{\alpha}\biggr)^{1/\alpha}.
\end{align*}
Here, $H_\alpha(\cdot)$ denotes the {Rényi entropy} of order $\alpha$, 
while $H_\alpha^{\mathrm{A}}(\cdot|\cdot)$ is referred to as the
{conditional Arimoto Rényi entropy} of order $\alpha$.


All three versions of $I_\alpha$ reduce to the Shannon mutual information in the limit of
$\alpha \to 1$, and each is a non-decreasing function of $\alpha$. By definition, the Rényi mutual information
$I_\alpha^{\mathrm{R}}(U;V)$ is symmetric in $(U,V)$, whereas the Sibson and
Arimoto versions generally are not.
Consequently, for a fixed $\alpha$, one may consider up to five distinct
dependence measures between the message $M$ and eavesdropper’s
observation $Z$, namely
\begin{equation}\label{eq: I-alpha's}
I_\alpha^{\mathrm{R}}(M;Z), \quad
I_\alpha^{\mathrm{S}}(U;V), \quad
I_\alpha^{\mathrm{A}}(U;V), 
\end{equation}
where $\{U,V\}=\{M,Z\}$ (any order).

See \cite{verdu2015alpha,aishwarya2019remarks,lapidoth2019two,esposito2025sibson}
for an expanded discussion of various notions of $\alpha$-mutual information.

In this work we study the cases $\alpha = 2,\infty$ for $\sW_E=\qsc$ and
$\alpha = 2$ for the erasure channel $\sW_E$.
Our analysis is primarily carried out in terms of Rényi mutual information.
However, in the channel settings considered here, the 
relationships among the different $\alpha$-mutual information measures
allow us to translate our results to the corresponding Sibson and Arimoto
secrecy metrics. 

\section{Eavesdropper's channel \texorpdfstring{$\qsc(p)$}{W}}

Consider a wiretap channel setting in which $\sW_B$ is noiseless and
$\sW_E$ is a $\qsc(p)$ channel.
The encoder employs a standard coset coding scheme based on an additive
code $C$.

In this model, the channel noise acts additively on the transmitted codeword
$X$, and the induced distribution of eavesdropper’s observation $Z$ can
be expressed as the convolution of the input distribution with the noise
distribution $\beta_p$.
Moreover, the marginal distribution $P_Z$ is uniform.
Owing to these structural and symmetry properties, we obtain the following result.

\begin{lemma}
For $\sW_E=\qsc(p)$, the five definitions of $I_\alpha$
in \cref{eq: I-alpha's} coincide for all $\alpha \in [0,\infty]$.
\end{lemma}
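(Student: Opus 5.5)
The plan is to exploit two symmetries: the marginals $P_M$ and $P_Z$ are both uniform, and the joint law $P_{MZ}$ is invariant under translations by the group $\Z_q^n$. Write $\sM=\Z_q^n/C$ with $|\sM|=q^n/|C|$, so $P_M$ is uniform. Put $P_{Z|M}(z|m)=\frac{1}{|C|}\sum_{x\in m}\beta_p(z-x)$, where $m$ is a coset of $C$. Then $P_{Z|M}(z|m)=g(z-m)$ for a single function $g=P_C\ast\beta_p$ on $\Z_q^n$, depending only on the coset $z+C$ after shifting. Since $P_Z$ is uniform, $P_{M|Z}(m|z)=|\sM|\,q^{-n}\,P_{Z|M}(z|m)\cdot q^n/|\sM|$; equivalently $P_{MZ}(m,z)=|\sM|^{-1}g(z-m)$. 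The key observation is that for each fixed $z$, the vector $(P_{MZ}(m,z))_m$ is a permutation of the same vector. Likewise, for each fixed $m$, $(P_{MZ}(m,z))_z$ is a translate of a fixed vector.

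\textbf{Step 1 (Sibson, $U=M$, $V=Z$).} Apply \eqref{eq:Sibson-opt}. The quantity $\sum_m P_M(m)P_{Z|M}(z|m)^\alpha=|\sM|^{-1}\sum_m g(z-m)^\alpha$ is independent of $z$, because the cosets $z-m$ run over all cosets. Hence $Q_Z^\star$ is uniform, so it equals $P_Z$, and $I^{\mathrm S}_\alpha(M;Z)=I^{\mathrm R}_\alpha(M;Z)$.

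\textbf{Step 2 (Sibson, $U=Z$, $V=M$).} By the same argument, $\sum_z P_Z(z)P_{M|Z}(m|z)^\alpha$ is independent of $m$, since translating $z$ permutes the summands. So $Q_M^\star=P_M$ and $I^{\mathrm S}_\alpha(Z;M)=D_\alpha(P_{MZ}\|P_MP_Z)=I^{\mathrm R}_\alpha$.

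\textbf{Step 3 (Arimoto).} For a uniform $U$ on $N$ points and any $V$, I will verify the identity $I^{\mathrm A}_\alpha(U;V)=I^{\mathrm S}_\alpha(U;V)$ whenever $P_U$ is uniform. This follows from the known closed form
\[
I^{\mathrm S}_\alpha(U;V)=\frac{\alpha}{\alpha-1}\log\sum_v\Bigl(\sum_u P_U(u)P_{V|U}(v|u)^\alpha\Bigr)^{1/\alpha},
\]
obtained by substituting \eqref{eq:Sibson-opt} into \eqref{eq:Sibson}. Next I substitute $P_U=1/N$. Then $H_\alpha(U)=\log N$, and the Arimoto conditional entropy becomes
\[
\frac{\alpha}{1-\alpha}\log\sum_v\Bigl(\sum_u N^{-\alpha}P_{V|U}(v|u)^\alpha\Bigr)^{1/\alpha}.
\]
Pulling out the factor $N^{(1-\alpha)/\alpha}$ inside the sum yields exactly $\log N-I^{\mathrm S}_\alpha(U;V)$. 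Both $M$ and $Z$ are uniform, so this applies in both orders, giving $I^{\mathrm A}_\alpha(M;Z)=I^{\mathrm S}_\alpha(M;Z)$ and $I^{\mathrm A}_\alpha(Z;M)=I^{\mathrm S}_\alpha(Z;M)$.

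\textbf{Step 4 (endpoint orders).} The cases $\alpha\in\{0,1,\infty\}$ follow by taking limits, since each identity above holds for every $\alpha\in(0,1)\cup(1,\infty)$ and all the quantities are continuous in $\alpha$. The case $\alpha=1$ can also be checked directly, since all five measures reduce to Shannon mutual information.

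The main obstacle is purely bookkeeping: verifying the Sibson closed form and the exponent algebra in Step 3, and handling the limits cleanly, including the supports at $\alpha=0$. The symmetry claims in Steps 1 and 2 are immediate from the translation invariance of $\beta_p$ and the group structure of $C$.
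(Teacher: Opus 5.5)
Your proposal is correct and follows essentially the same route as the paper: translation invariance makes the Sibson optimizer \eqref{eq:Sibson-opt} uniform, hence equal to the true marginal, in both orders (so Sibson $=$ R\'enyi), and uniformity of the first argument gives Arimoto $=$ Sibson, which you verify by direct computation where the paper simply cites the known corollary. One harmless slip: the correct relation is $P_{M|Z}(m|z)=\frac{q^n}{|\sM|}P_{Z|M}(z|m)=|C|\,g(z-c_m)$, not the expression you wrote (which collapses to $P_{Z|M}(z|m)$), but Step~2 only uses that it depends on $z-c_m$ alone.
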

{\em Proof idea:} First, note that when the first random variable in the Arimoto mutual information is uniformly
distributed, it coincides with the Sibson mutual information \cite[Corollary~IV.16]{esposito2025sibson}. The equality between \Renyi and Sibson versions relies on the variational characterization of Sibson's information radius \eqref{eq:Sibson}-\eqref{eq:Sibson-opt}.

Below, we denote the common value of $\alpha$-information measures by $I_\alpha^\star$.
The main result of this section follows.

\begin{theorem}\label{thm: qsc-2}
Consider the wiretap channel described above, and suppose that $C$ is a UO code. Then the coset coding scheme based on $C$ minimizes $I_2^\star(M;Z)$ over all codes of the same size, simultaneously for all values of $p \in (0,1-1/q)$.
\end{theorem}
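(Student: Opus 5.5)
The plan is to compute $I_2^\star(M;Z)$ explicitly for the coset scheme. By the Lemma, it suffices to work with the Rényi version $I_2^{\mathrm R}(M;Z)=D_2(P_{MZ}\|P_MP_Z)$. The computation should reduce it to a monotone function of the energy $V_{\beta_{r(p)}}(C)$. Since $\beta_{r(p)}$ is completely monotonic by \eqref{eq: beta cm} and \eqref{eq:r(p)}, UO codes minimize that energy, and the theorem follows.

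\emph{Step 1: the joint distribution.} Here $C=C_E$, each message $m$ corresponds to a coset $x_m+C$, and $|\sM|=q^n/|C|$. Given $M=m$, the channel input is uniform on $x_m+C$. Hence
\[
P_{Z|M}(z|m)=\frac1{|C|}\sum_{c\in C}\beta_p(z-x_m-c)=(P_C\ast\beta_p)(z-x_m).
\]
The output marginal is $P_Z=q^{-n}$, uniform.

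\emph{Step 2: evaluate $D_2$.} We have
\begin{align*}
\sum_{m,z}\frac{P_{MZ}(m,z)^2}{P_M(m)P_Z(z)}
&= q^n\sum_m P_M(m)\sum_z P_{Z|M}(z|m)^2\\
&= q^n\sum_z (P_C\ast\beta_p)(z)^2,
\end{align*}
where the second equality holds because the inner sum is translation invariant. Expanding the square and using the symmetry $\beta_p(-u)=\beta_p(u)$ gives
\[
\sum_z (P_C\ast\beta_p)(z)^2=\frac1{|C|^2}\sum_{c,c'\in C}(\beta_p\ast\beta_p)(c-c').
\]
By \eqref{eq:r(p)}, $\beta_p\ast\beta_p=\beta_{r(p)}$, so the double sum equals $|C|^{-1}V_{\beta_{r(p)}}(C)$. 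Therefore
\[
I_2^\star(M;Z)=\log\Big(\frac{q^n}{|C|}\,V_{\beta_{r(p)}}(C)\Big).
\]
This is increasing in the energy $V_{\beta_{r(p)}}(C)$ once $|C|$ is fixed.

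\emph{Step 3: optimality.} For $p\in(0,1-1/q)$ we have $r(p)\in(0,1-1/q)$ as well. Indeed, $r$ is increasing and $r(1-1/q)=1-1/q$, since $\beta_{1-1/q}$ is uniform and is therefore a fixed point of self-convolution. So \eqref{eq: beta cm} applies at $r(p)$, and $\beta_{r(p)}$ is completely monotonic. A UO code $C$ then satisfies $V_{\beta_{r(p)}}(C)\le V_{\beta_{r(p)}}(C')$ for every $C'$ with $|C'|=|C|$, which gives the claim simultaneously for all $p$.

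For a non-additive competitor $C'$ the coset scheme is not literally defined. I would interpret the comparison as being among additive codes, or among schemes whose messages map to translates of $C'$ tiling the space, where the same computation applies verbatim.

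The main obstacle is Step 2: verifying that the convolution identity produces precisely the distance-based energy with potential $\beta_{r(p)}$. This requires checking that $\beta_p\ast\beta_p$ is radial with the stated parameter; I would check it coordinatewise on $\Z_q$. One also has to confirm that the range of $r$ stays inside the completely monotonic regime, which is handled in Step 3.
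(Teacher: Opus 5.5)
Your proposal is correct and follows the paper's proof essentially step for step. The paper likewise writes $P_{Z|M}(z|m)=(\beta_p\ast P_C)(z-c_m)$ and reduces $q^n\sum_z(\beta_p\ast P_C)(z)^2$ to $\frac{q^n}{|C|}\sum_i\beta_{r(p)}(i)A_i$ via $\beta_p\ast\beta_p=\beta_{r(p)}$ (Lemma~\ref{lem: qsc-2}), then concludes from complete monotonicity of $\beta_{r(p)}$; your check that $r(p)\in(0,1-1/q)$, so that \eqref{eq: beta cm} applies, and your remark that the comparison is implicitly among additive codes are small clarifications the paper leaves tacit.
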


The proof proceeds by showing that the quantity appearing inside the
logarithm in the \Renyi mutual information admits an energy representation
with respect to the potential $\beta_{r(p)}=\beta_p^{\ast 2}$ defined in \cref{eq:r(p)}.
This representation is established in Lemma~\ref{lem: qsc-2} below.
Since $\beta_p$ is a completely monotonic potential, universal optimality of
the code $C$ implies the desired minimization property, thereby completing
the proof.

\begin{lemma}\label{lem: qsc-2}
Consider a coset encoding scheme based on an additive code $C$ with distance
distribution $(A_i)_{i=0}^n$.
Then
\begin{align}\label{eq: qsc-2}
    I_2^\star(M;Z)
    = \log\!\left( \frac{q^n}{|C|} \sum_{i=0}^n \beta_{r(p)}(i)\, A_i \right).
\end{align}
\end{lemma}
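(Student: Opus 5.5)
Since the five measures coincide by the preceding lemma, I will work only with the Rényi version $I_2^{\mathrm R}(M;Z)=D_2(P_{MZ}\|P_MP_Z)$ and evaluate it explicitly. The setup: the message $m\in\sM\cong\Z_q^n/C$ is identified with a coset $x_m+C$. The transmitted $X$ is uniform on that coset, and $Z=X+U$ with $U\sim\beta_p$. Hence
\[
P_{Z|M}(z|m)=\frac{1}{|C|}\sum_{c\in C}\beta_p(z-x_m-c).
\]
Moreover $P_M$ is uniform on $q^n/|C|$ messages, and $P_Z$ is uniform on $\Z_q^n$, as noted before the lemma.

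\textbf{Step 1: reduce to a collision sum.} Because $\alpha=2$,
\[
D_2(P_{MZ}\|P_MP_Z)=\log\sum_{m,z}\frac{P_M(m)^2P_{Z|M}(z|m)^2}{P_M(m)P_Z(z)}
=\log\Big(q^n\sum_m P_M(m)\sum_z P_{Z|M}(z|m)^2\Big).
\]
By translation invariance of the Hamming metric, $\sum_z P_{Z|M}(z|m)^2$ does not depend on $m$: substituting $z\mapsto z+x_m$ turns it into the case $x_m=0$. So the whole expression equals
\[
\log\Big(q^n\sum_z P_{Z|M}(z|0)^2\Big).
\]

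\textbf{Step 2: expand the square and convolve.} Expanding the square gives
\[
\sum_z P_{Z|M}(z|0)^2=\frac1{|C|^2}\sum_{c,c'\in C}\sum_z\beta_p(z-c)\beta_p(z-c').
\]
The inner sum is $(\beta_p\ast\tilde\beta_p)(c-c')$, where $\tilde\beta_p(u)=\beta_p(-u)$. Since $\beta_p$ is radial, $\tilde\beta_p=\beta_p$, so the inner sum is $\beta_p^{\ast2}(c-c')=\beta_{r(p)}(d(c,c'))$ by \eqref{eq:r(p)}. Grouping pairs by distance, $\sum_{c,c'}\beta_{r(p)}(d(c,c'))=|C|\sum_i A_i\beta_{r(p)}(i)$ by the definition of $A_i$. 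This leaves $\frac1{|C|}\sum_iA_i\beta_{r(p)}(i)$, and substituting into Step 1 yields \eqref{eq: qsc-2}.

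\textbf{Main obstacle.} The only nonroutine point is the identity $\beta_p\ast\beta_p=\beta_{r(p)}$, which the paper has already recorded in \eqref{eq:r(p)}. If needed, I would verify it coordinatewise, since the convolution of product measures is a product. In one coordinate, the probability of a zero total error is $(1-p)^2+p^2/(q-1)$, which equals $1-r(p)$. The probability of each specific nonzero error is $2p(1-p)/(q-1)+(q-2)p^2/(q-1)^2$, which equals $r(p)/(q-1)$. Additivity of $C$ is not strictly needed beyond making the cosets partition $\Z_q^n$, which is what makes $P_M$ and $P_Z$ uniform.
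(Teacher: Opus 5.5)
Your proposal is correct and follows essentially the same route as the paper: reduce $I_2^{\mathrm R}(M;Z)$ to $\log\bigl(q^n\sum_z P_{Z|M}(z|m_0)^2\bigr)$, expand the square, and identify the inner sum as $(\beta_p\ast\beta_p)(c-c')=\beta_{r(p)}(d(c,c'))$. The one difference is cosmetic: you group the pairs $(c,c')$ by distance directly from the definition of $A_i$, whereas the paper collapses the double sum using $P_C\ast P_C=P_C$ and then the fact that an additive code's weight and distance distributions coincide.
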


\begin{proof} Let $c_m$ be an arbitrary element of the coset corresponding to message $m$. We proceed as follows:
 \begin{align*}
 P_{Z|M}&(z \mid m)
 = \sum_{x \in {\Z}_q^n} P_{Z,X|M}(z,x \mid m)\\
& = \sum_{x \in {\Z}_q^n} P_{Z|X}(z \mid x)\, P_{X|M}(x \mid m) \nonumber\\
 &= \sum_{x \in {\Z}_q^n} \beta_p(z-x)\, P_C(x-c_m)
 = (\beta_p \ast P_C)(z-c_m).
 \end{align*}
   We obtain
\begin{align*}\label{eq: closed}
    I_\alpha^{\mathrm R}(M;Z) &= \frac{1}{\alpha-1}\log \sum_{m,z}\frac{P_M(m)P_{Z|M}(z|m)^\alpha}{P_Z(z)^{\alpha-1}} 
    \\
    &= \frac{1}{\alpha-1}\log \Big(q^{n(\alpha-1)}\sum_{z}(\beta_p \ast P_C)(z)^\alpha\Big).
\end{align*}

For $\alpha = 2$, we have:
\begin{align}
    I_2^\star(M;Z) = \log \Big(q^{n}\sum_{z}(\beta_p \ast P_C)(z)^2\Big).
\end{align}

We now expand the sum 
\begin{align*}
&\sum_{z} \big( \beta_p \ast P_C\big)(z)^2\\
&= \sum_{z} \big( \sum_{x} \beta_p(z-x)\, P_C(x) \big)
                  \big( \sum_{x'} \beta_p(z-x')\, P_C(x') \big) \\
&= \sum_{x,x'} P_C(x) P_C(x')
   \sum_{z} \beta_p(z-x)\, \beta_p(z-x').
\end{align*}

Since $\beta_p(z-x') = \beta_p(x'-z)$, the inner sum is a convolution,
and with \cref{eq:r(p)}, we obtain
\begin{align*}
\sum_{z} \left( \beta_p \ast P_C(z) \right)^2 &= \sum_{x,x'} P_C(x) P_C(x')
   \beta_{r(p)}(x-x')\\
   &= \sum_{x,x''} P_C(x) P_C(x-x'')
   \beta_{r(p)}(x'')\\
   &= \sum_{x''} \beta_{r(p)}(x'') \sum_{x} P_C(x) P_C(x''-x)\\
   &= \sum_{x''} \beta_{r(p)}(x'')P_C(x'')\\
   &= \frac{1}{|C|} \sum_{i=0}^n \beta_{r(p)}(i)\, A_i.\qedhere
\end{align*}
\end{proof}
\vspace*{-.1in}As noted earlier, the potential $\beta_p$ is completely monotonic.
Consequently, universally optimal codes minimize the expression in
\eqref{eq: qsc-2} for each $p$ among all codes of the same size.
This establishes Theorem~\ref{thm: qsc-2}.

An analogous result holds for the case $\alpha = \infty$.

\begin{theorem}\label{thm: qsc-inf}
Consider the wiretap channel described above, and suppose that $C$ is a UO code. Then the coset coding scheme based on $C$ minimizes $I_\infty^\star(M;Z)$ over all codes of the same size, simultaneously for all values of $p \in (0,1-1/q)$.
\end{theorem}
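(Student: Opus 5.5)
The plan is to mirror the proof of Theorem~\ref{thm: qsc-2}: write $I_\infty^\star(M;Z)$ as the logarithm of an energy of $C$ with respect to the completely monotonic potential $\beta_p$ itself, rather than $\beta_{r(p)}$. As in the proof of Lemma~\ref{lem: qsc-2}, $P_{Z|M}(z|m)=(\beta_p\ast P_C)(z-c_m)$ and $P_Z$ is uniform. Letting $\alpha\to\infty$ in the Rényi expression (or using $D_\infty(P\|Q)=\log\max P/Q$) then gives
\begin{align*}
I_\infty^\star(M;Z)=\log\Big(q^n\max_{z\in\Z_q^n}(\beta_p\ast P_C)(z)\Big).
\end{align*}
The task therefore reduces to identifying $\max_z(\beta_p\ast P_C)(z)$ with a linear functional of the distance distribution.

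\emph{Step 1: the maximum is attained at $z=0$ when $C$ is additive.} I will argue via the characters of $\Z_q^n$.
\begin{itemize}
\item Since $\beta_p$ is a product of single-letter $\qsc(p)$ kernels, its Fourier transform is a product of the single-letter eigenvalues $1$ and $1-\frac{qp}{q-1}$. For $p\in(0,1-1/q)$ these are positive, so $\hat\beta_p\ge 0$.
\item For an additive code, $\hat P_C$ is a nonnegative multiple of the indicator of the dual group $C^\perp$.
\item Hence $\widehat{\beta_p\ast P_C}\ge 0$, and the inversion formula gives $|(\beta_p\ast P_C)(z)|\le q^{-n}\sum_\chi\widehat{\beta_p\ast P_C}(\chi)=(\beta_p\ast P_C)(0)$.
\end{itemize}
Additivity also makes the distance distribution equal the weight distribution. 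Therefore
\begin{align*}
\max_z(\beta_p\ast P_C)(z)=\frac1{|C|}\sum_{c\in C}\beta_p(|c|)=\frac{1}{|C|}\sum_{i=0}^n A_i\,\beta_p(i)=\frac{V_{\beta_p}(C)}{|C|}.
\end{align*}

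\emph{Step 2: a matching lower bound for an arbitrary code $C'$ of the same size.} Here the maximum need not be at the origin, but I can bound it below by an average over the code:
\begin{align*}
\max_z(\beta_p\ast P_{C'})(z)\ge\frac1{|C'|}\sum_{z\in C'}(\beta_p\ast P_{C'})(z)=\frac{1}{|C'|^2}\sum_{x,y\in C'}\beta_p(d(x,y))=\frac{V_{\beta_p}(C')}{|C'|}.
\end{align*}
This bound holds whether or not $C'$ is additive, and it is attained with equality by additive codes by Step~1.

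\emph{Step 3: conclusion.} By \eqref{eq: beta cm}, $\beta_p$ is completely monotonic for every $p\in(0,1-1/q)$. Universal optimality of $C$ therefore gives $V_{\beta_p}(C)\le V_{\beta_p}(C')$ for every $C'$ with $|C'|=|C|$. Combining this with Steps~1 and~2 yields $I_\infty^\star$ for $C$ at most that for any competitor, simultaneously for all $p$.

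The main obstacle is Step~1, namely locating the maximizer of $\beta_p\ast P_C$. I expect the Fourier positivity argument to be the cleanest route. It needs $p<1-1/q$, which is exactly the range in the statement, and additivity, which is where the coset-coding structure enters.
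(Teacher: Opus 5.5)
Your proposal is correct and follows the paper's proof essentially step for step. You reduce $I_\infty^\star$ to $\log\bigl(q^n\max_z(\beta_p\ast P_C)(z)\bigr)$, use nonnegativity of $\widehat{\beta_p}\,\widehat{P_C}$ to place the maximum at $z=0$, identify $(\beta_p\ast P_C)(0)$ with $\frac{1}{|C|}\sum_i\beta_p(i)A_i$, and conclude from the complete monotonicity of $\beta_p$. The differences are minor. You compute $\widehat{\beta_p}$ and $\widehat{P_C}$ explicitly, whereas the paper uses $\beta_p=\beta_{r^{-1}(p)}\ast\beta_{r^{-1}(p)}$ and $P_C=P_C\ast P_C$. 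Your averaging bound in Step~2 is an unneeded extra: the competitors are coset schemes over additive codes, for which Step~1 already gives equality.
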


In analogy with Lemma~\ref{lem: qsc-2}, we show that $I_\infty^\ast$ can be written as an energy function with 
the potential $\beta_p$.
This representation is established in the following lemma, which directly
implies Theorem~\ref{thm: qsc-inf}.

The proof uses the Fourier transform on the group ${\Z}_q^n$.
For a function $F : {\Z}_q^n \to {\R}$, its Fourier transform is defined as
\begin{align*}
    \widehat{F}(u)
    = \frac{1}{q^n} \sum_{x \in {\Z}_q^n} F(x)\omega_q^{-u\cdot x},
\end{align*}
with the inversion formula
\begin{align*}
    F(x)
    = \sum_{u \in {\Z}_q^n} \widehat{F}(u)\omega_q^{u\cdot x}.
\end{align*}

Here $\omega_q$ is a $q$th root of unity and $u\cdot x := \sum_iu_ix_i \mod q$.
As usual, the Fourier transform exchanges convolution with a product:
\begin{align}\label{eq: conv}
    \widehat{F \ast G}(u)
    = q^n\, \widehat{F}(u)\widehat{G}(u).
\end{align}
Moreover, the functions we use below are even ($F(x)=F(-x)$), and thus their Fourier transforms are real.

\begin{lemma}\label{lem: qsc-inf}
    Consider a coset encoding scheme based on an additive code $C$ with distance
distribution $(A_i)_{i=0}^n$.
Then
    \begin{align}\label{eq: qsc-inf}
        I_\infty^\star(M;Z)
        = \log\!\left( \frac{q^n}{|C|}\sum_{i=0}^n \beta_{p}(i)\, A_i \right).
    \end{align}
\end{lemma}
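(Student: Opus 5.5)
From the proof of Lemma~\ref{lem: qsc-2}, $P_{Z|M}(z\mid m)=(\beta_p\ast P_C)(z-c_m)$, and $P_Z$ is uniform, equal to $q^{-n}$. Taking $\alpha\to\infty$ in the closed form for $I_\alpha^{\mathrm R}$ derived there, the Rényi divergence becomes a max-divergence:
\begin{align*}
I_\infty^\star(M;Z)=\log\max_{m,z}\frac{P_{Z|M}(z|m)}{P_Z(z)}=\log\Big(q^n\max_{z\in\Z_q^n}(\beta_p\ast P_C)(z)\Big).
\end{align*}
Equivalently, $\frac1{\alpha-1}\log\big(q^{n(\alpha-1)}\|\beta_p\ast P_C\|_\alpha^\alpha\big)\to\log\big(q^n\|\beta_p\ast P_C\|_\infty\big)$. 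So the task reduces to showing that the maximum of $\beta_p\ast P_C$ over $z$ is attained at $z=0$. At that point the value is
\begin{align*}
(\beta_p\ast P_C)(0)=\frac1{|C|}\sum_{x\in C}\beta_p(x)=\frac1{|C|}\sum_i\beta_p(i)A_i .
\end{align*}
The second equality holds because $C$ is additive, so its weight distribution coincides with its distance distribution $A_i$. This gives \eqref{eq: qsc-inf}.

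The key step is the claim $\max_z(\beta_p\ast P_C)(z)=(\beta_p\ast P_C)(0)$, which we prove with the Fourier transform. Since $C$ is a subgroup, $\widehat{P_C}(u)=q^{-n}\1_{C^\perp}(u)$, where $C^\perp=\{u:u\cdot x=0\ \forall x\in C\}$. By coordinatewise factorization,
\begin{align*}
\widehat{\beta_p}(u)=q^{-n}\Big(1-\frac{pq}{q-1}\Big)^{|u|},
\end{align*}
which is nonnegative for $p\in(0,1-1/q)$. By \eqref{eq: conv} and inversion,
\begin{align*}
(\beta_p\ast P_C)(z)=\sum_{u}q^n\widehat{\beta_p}(u)\widehat{P_C}(u)\,\omega_q^{u\cdot z}=q^{-n}\sum_{u\in C^\perp}\Big(1-\frac{pq}{q-1}\Big)^{|u|}\omega_q^{u\cdot z}.
\end{align*}
All coefficients are nonnegative, so by the triangle inequality the absolute value is at most the value at $z=0$, where every phase equals $1$. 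Since the function is real, the maximum is attained at $z=0$.

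The main obstacle is the Fourier computation: checking the one-dimensional transform $\widehat{\beta}(a)=\frac1q\big((1-p)-\frac{p}{q-1}\big)$ for $a\ne0$ and $\frac1q$ for $a=0$, and verifying its nonnegativity on the stated range of $p$. A second point needing care is justifying the $\alpha\to\infty$ limit in the identification of all five measures as $I_\infty^\star$. Here we rely on the preceding lemma asserting that they coincide for all $\alpha\in[0,\infty]$, so it suffices to compute the Rényi version.

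Theorem~\ref{thm: qsc-inf} then follows exactly as Theorem~\ref{thm: qsc-2} did: $\beta_p$ is completely monotonic by \eqref{eq: beta cm}, so a UO code minimizes the right-hand side of \eqref{eq: qsc-inf} among all codes of the same size.
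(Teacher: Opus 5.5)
Your proof is correct and follows the paper's argument. You let $\alpha\to\infty$ to get $\log\bigl(q^n\max_z(\beta_p\ast P_C)(z)\bigr)$, use nonnegative Fourier coefficients and the triangle inequality to place the maximum at $z=0$, and evaluate $(\beta_p\ast P_C)(0)$ through the distance distribution. The only difference is that you compute $\widehat{P_C}=q^{-n}\1_{C^\perp}$ and $\widehat{\beta_p}(u)=q^{-n}\bigl(1-\frac{pq}{q-1}\bigr)^{|u|}$ explicitly, whereas the paper gets nonnegativity structurally from $P_C=P_C\ast P_C$ and $\beta_p=\beta_{r^{-1}(p)}\ast\beta_{r^{-1}(p)}$.
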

\begin{proof} 
    Letting $\alpha \to \infty$ in \eqref{eq: closed}, we obtain
    \begin{align}\label{eq: qsc-inf2}
        I_{\infty}^{\mathrm{R}}(M;Z) = \log (q^n \max_z (\beta_p \ast P_C)(z)).
    \end{align}
To find the maximum in \cref{eq: qsc-inf2}, observe that
    \begin{align*}
        (\beta_p \ast P_C)(0)
        &= q^n \sum_{u \in {\Z}_q^n}
            \widehat{\beta_p}(u)\, \widehat{P_C}(u).
    \end{align*}
Replacing $p$ with $r^{-1}(p)$ in the definition of $\beta_{r(p)}$ \eqref{eq:r(p)}, we obtain $\beta_p = \beta_{r^{-1}(p)} \ast \beta_{\,r^{-1}(p)}$. Now \cref{eq: conv} implies that $\widehat{\beta_p}$ is nonnegative. Likewise, since $P_C = P_C \ast P_C$, $\widehat{P_C}$ is also nonnegative. Consequently, all products of the form $\widehat{\beta_p}(u)\, \widehat{P_C}(u)$ are nonnegative.

    For a general $z$,
    \begin{align*}
        (\beta_p \ast P_C)(z)
        &= \Big|q^n \sum_u \widehat{\beta_p}(u)\widehat{P_C}(u)\omega_q^{u\cdot z}\Big|\\
        &\leq q^n \sum_u \big|\widehat{\beta_p}(u)\widehat{P_C}(u)\omega_q^{u\cdot z}\big|\\
        &= q^n \sum_u \widehat{\beta_p}(u)\widehat{P_C}(u)
        = (\beta_p \ast P_C)(0),
    \end{align*}
    so the convolution attains its maximum at $z=0$. With this, we can write $(\beta_p \ast P_C)(0)$ in terms of the distance distribution:
    \[
(\beta_p \ast P_C)(0)
        = \sum_{x \in {\Z}_q^n} \beta_p(-x) P_C(x) = \frac{1}{|C|}\sum_{i=0}^n \beta_p(i)\, A_i.
    \]
Substitution into \cref{eq: qsc-inf2} concludes the proof. 
\end{proof}

\section{Eavesdropper's channels with erasures}\label{sec: qEC}

In this section, we consider a wiretap setting in which $\sW_E$ introduces erasures. To set up the analysis, we begin with some notation.

An erasure pattern $Z$ is represented by a pair $(E,S)$, where $E \subseteq [n]$ denotes the set of erased coordinates and $S = X|_{E^\complement}$ records the symbols observed in the nonerased positions. The total number of erasures is denoted as $|E|$ or $|Z|$. 

 An erasure pattern $Z=(E,S)$ is said to be {\bf radially distributed} if, conditioned on $|E| = w$, the erased set $E$ is chosen uniformly among all subsets of size $w$. Only such erasure channels will be considered below.
This class includes several important models. In particular, adversarial erasure patterns with a fixed number of erasures, chosen uniformly over all possible locations, fall into this category.
Moreover, memoryless erasure channels $\qec(p)$ also induce radial erasure
distributions.

We are now ready to state the main theorem of this section.

\begin{theorem}\label{thm: erasure}
Consider a wiretap channel in which $\sW_B$ is noiseless and the $\sW_E$ introduces
radially distributed erasures on an $n$-symbol block.
If $C$ is a UO code, then the coset coding scheme based on
it minimizes each of the following dependence measures over all codes of the same size:
\[
I_2^{\circ}(Z;M), \quad \circ \in \{\mathrm{R},\mathrm{S},\mathrm{A}\}.
\]
\end{theorem}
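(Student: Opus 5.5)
The plan is to compute all three measures in closed form for a coset scheme over a radial erasure channel and show that each is a monotone function of a single linear functional of the distance distribution, namely an energy $V_f(C)$ with a completely monotonic potential $f$. We condition on the erasure pattern. Let $E$ be the erased set with $|E|=w$, and let $C_{\bar E}$ be the projection of $C$ onto the unerased coordinates $E^\complement$. Since $X$ is uniform on the coset $c_m+C$, the observation $S=X|_{E^\complement}$ is uniform on $c_m|_{E^\complement}+C_{\bar E}$, a coset of the additive code $C_{\bar E}\subseteq\Z_q^{n-w}$. Hence $P_{S|M,E}(s|m)=1/|C_{\bar E}|$ on that coset. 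Moreover, $P_{S|E}$ is uniform on $\Z_q^{n-w}$, because $M$ is uniform and the cosets of $C$ partition $\Z_q^n$. The channel is radial, so $E$ is independent of $M$ and of $X$.

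With this, the \Renyi quantity becomes
\[
\sum_{m,z}P_M(m)P_{Z|M}(z|m)^2/P_Z(z)=\sum_{w}P(|E|=w)\binom nw^{-1}\sum_{|E|=w}\frac{q^{n-w}}{|C_{\bar E}|}.
\]
So $I_2^{\mathrm R}$ is the log of an average of $q^{n-w}/|C_{\bar E}|$. For an additive code, $|C_{\bar E}|=|C|/|C_E^{0}|$, where $C_E^0=\{x\in C:\operatorname{supp}(x)\subseteq E\}$. Therefore
\[
\sum_{|E|=w}|C_E^0|=\sum_{x\in C}\binom{n-|x|}{w-|x|}=\sum_i A_i\binom{n-i}{n-w}.
\]
This yields
\[
I_2^{\mathrm R}(M;Z)=\log\Big(\frac{1}{|C|}\sum_{i}A_i f(i)\Big),\qquad f(i)=\sum_w P(|E|=w)\,\frac{q^{n-w}}{\binom nw}\binom{n-i}{n-w}.
\]
This $f$ is a nonnegative combination of the generators $f_j(i)=\binom{n-i}{j}$ in \eqref{eq: cm_basis}, so it is completely monotonic. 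A UO code therefore minimizes $V_f(C)$, and hence minimizes $I_2^{\mathrm R}$, among codes of the same size. The key structural point is that the binomial moments appear directly, which is why the radial assumption matters.

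Next we reduce the Sibson and Arimoto versions, in both orders, to the same functional. For $I_2^{\mathrm S}(M;Z)$, the optimizer \eqref{eq:Sibson-opt} is
\[
Q^\star(z)\propto\Big(\sum_m P_M(m)P_{Z|M}(z|m)^2\Big)^{1/2}=\big(P(E)\,|C_{\bar E}|^{-1}q^{-(n-w)}\big)^{1/2},
\]
which is no longer proportional to $P_Z$ once the $|C_{\bar E}|$ vary with $E$. The plan is therefore to use Sibson's closed form
\[
I_\alpha^{\mathrm S}(M;Z)=\frac{\alpha}{\alpha-1}\log\sum_z\Big(\sum_m P_M(m)P_{Z|M}(z|m)^\alpha\Big)^{1/\alpha}
\]
and to evaluate it per erasure set. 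We expect this to give $2\log\sum_E P(E)\,|C_{\bar E}|^{-1/2}q^{(n-w)/2}$, which is a different functional, $\sum_E P(E)\,|C_E^0|^{1/2}\cdots$. This is the main obstacle: the square root is not linear in $A_i$. The first thing to try is to reread the statement as the order $I_2^{\circ}(Z;M)$, with $Z$ first, as written. Then $I^{\mathrm S}_2(Z;M)$ involves $\sum_m\big(\sum_z P_Z(z)P_{M|Z}(m|z)^2\big)^{1/2}$. Here $P_{M|Z}$ is uniform on the $|C_E^0|\,q^{n}/(|C|\,q^{n-w})$ messages consistent with $z$, and by symmetry the inner sum is independent of $m$. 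The outer sum is then over a uniform variable and is linear in $\sum_E P(E)|C_E^0|$. This gives again a monotone function of $V_f(C)$ with the same $f$. Arimoto with $Z$ first then follows from the same symmetry, while Arimoto with $M$ first equals Sibson because $M$ is uniform, by \cite[Corollary~IV.16]{esposito2025sibson}. In each case we check that the resulting expression is increasing in $V_f(C)$ for the same completely monotonic $f$, and the conclusion follows exactly as for Theorem~\ref{thm: qsc-2}.
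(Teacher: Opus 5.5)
Your argument is correct for the three measures in the statement and is essentially the paper's proof. The R\'enyi quantity reduces to the binomial moments $\sum_{|E|=w}|C_E^0|=\sum_i A_i\binom{n-i}{n-w}$; your count via $|C_{\bar E}|=|C|/|C_E^0|$ just reproves the identity the paper cites from \cite{ashikhmin1999binomial}. The $Z$-first Sibson and Arimoto measures are then handled by coset symmetry over messages, which is the same fact as the paper's uniform Sibson optimizer $\widetilde Q_M^\star$. Keeping $Z$ first is also right: the paper's Remark leaves $I_2^{\mathrm{S}}(M;Z)=I_2^{\mathrm{A}}(M;Z)$ uncovered, and your square-root formula shows why.

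Several intermediate statements are wrong and should be fixed.
\begin{enumerate}
\item The number of messages consistent with $z=(E,s)$ is $q^{w}/|C_E^0|$, not $|C_E^0|q^{w}/|C|$. Hence $P_{M|Z}(m|z)=|C_E^0|q^{-w}$, and the $m$-independent inner sum is $\frac{|C|}{q^n}\sum_E P(E)\,q^{-w}|C_E^0|$. With this, $I_2^{\mathrm{S}}(Z;M)=I_2^{\mathrm{R}}$ exactly. The weights $q^{-w}$, missing from your ``$\sum_E P(E)|C_E^0|$'', are what give the same $f$.
\item For $I_2^{\mathrm{A}}(Z;M)=H_2(Z)+\log\sum_z P_{Z|M}(z|m_0)^2$, the relevant potential is $\sum_w P(|E|=w)^2\binom{n}{w}^{-2}\binom{n-i}{n-w}$, not your $f$. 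It is still completely monotonic, and $H_2(Z)$ does not depend on $C$, so the conclusion holds. But the claim of ``the same $f$ in each case'' is false.
\item Drop the claim about Arimoto with $M$ first. It equals $I_2^{\mathrm{S}}(M;Z)$, which you yourself showed is not linear in $(A_i)$, so your argument does not cover it. The theorem does not need it.
\end{enumerate}
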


Unlike the $\qsc$ channels for which all $\alpha$-mutual information measures
coincide, in the present erasure setting, this no longer holds.
As a result, the statement of Theorem~\ref{thm: erasure} does not extend to
all dependence measures considered in the $\qsc$ case, and the minimization
property must be stated explicitly for each admissible notion of mutual
information.

Theorem~\ref{thm: erasure} highlights the power of universally optimal
codes: a single code simultaneously minimizes the dependence between the
message $M$ and the eavesdropper’s observation $Z$ for \emph{all} radial
erasure distributions.
An immediate corollary of this result is obtained by specializing to the
memoryless erasure channel.
\begin{corollary} 
Consider a wiretap channel in which $\sW_B$ is noiseless and $\sW_E=\qec(p)$.  If $C$ is a UO code, then the coset coding scheme 
    based on it minimizes each of the following dependence measures over all codes of the same size:
\[
I_2^{\circ}(Z;M), \quad \circ \in \{\mathrm{R},\mathrm{S},\mathrm{A}\}.
\]
\end{corollary}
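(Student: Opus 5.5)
The corollary follows from Theorem~\ref{thm: erasure} once we check that $\qec(p)$ satisfies its hypothesis. The plan is to show that the memoryless $q$-ary erasure channel, used $n$ times on a block, induces a radially distributed erasure pattern. Then the coset coding scheme based on a UO code $C$ minimizes $I_2^{\circ}(Z;M)$ for $\circ\in\{\mathrm R,\mathrm S,\mathrm A\}$ over all codes of the same size, as the theorem asserts.

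For radiality, note that under $\qec(p)$ each coordinate is erased independently with probability $p$, independently of the input $X$. So for any fixed $E\subseteq[n]$,
\[
\Pr(\text{erased set}=E)=p^{|E|}(1-p)^{n-|E|}.
\]
This depends on $E$ only through $|E|$. Conditioning on $|E|=w$ therefore gives every $w$-subset probability $\binom{n}{w}^{-1}$, which is the uniform distribution. The erasure channel also does not alter the non-erased symbols, so $S=X|_{E^\complement}$ exactly as in the model of Sec.~\ref{sec: qEC}.

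With radiality established, we apply Theorem~\ref{thm: erasure} for each fixed $p$. The minimizing code given by the theorem is the same UO code $C$ for every $p$, since it is independent of the erasure distribution. Hence the minimization holds simultaneously for all $p$.

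The only step needing care is to confirm that the independence of the erasure process from the input matches the model assumed in Theorem~\ref{thm: erasure}, in which the erased set is chosen independently of $X$. Under that model the argument is immediate.
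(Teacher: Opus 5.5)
Your proposal is correct and is essentially the paper's argument: the paper obtains the corollary by noting that the memoryless $\qec(p)$ induces a radially distributed erasure pattern, independent of the transmitted word, and then specializing Theorem~\ref{thm: erasure}. Your explicit computation $\Pr(E)=p^{|E|}(1-p)^{n-|E|}$, together with the independence check, spells out the radiality step that the paper asserts without proof.
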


As a first step toward proving Theorem~\ref{thm: erasure}, we establish the result for the Rényi mutual information.

\begin{lemma}\label{lem:er1}
Under the assumptions of Theorem~\ref{thm: erasure}, the coset coding scheme
based on a UO code minimizes
$I_2^{\mathrm{R}}(M;Z)$.
\end{lemma}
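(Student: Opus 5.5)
We compute $I_2^{\mathrm R}(M;Z)$ explicitly for a coset scheme over an additive code $C$. The aim is to show that the quantity inside the logarithm is a nonnegative combination of the binomial moments $V_{f_j}(C)$. Universal optimality of $C$, through the basis \eqref{eq: cm_basis}, then finishes the argument.

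\emph{Step 1: the joint law of $(M,Z)$.} Fix an erasure set $E$ with $|E|=w$. The transmitted $X$ is uniform on the coset $c_m+C$, so $S=X|_{E^\complement}$ is uniform on the projected coset $(c_m+C)|_{E^\complement}$. This coset has size $|C|/|C_E|$, where $C_E=\{x\in C: \operatorname{supp}(x)\subseteq E\}$ is the kernel of the projection of $C$ onto $E^\complement$.

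Without conditioning on $M$, $X$ is uniform on $\Z_q^n$, so $P_{S|E}$ is uniform on $\Z_q^{n-w}$. Hence
\[
\frac{P_{Z|M}(z|m)}{P_Z(z)}=\frac{q^{n-w}|C_E|}{|C|}\,\1\{s\in (c_m+C)|_{E^\complement}\}.
\]

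\emph{Step 2: the R\'enyi information.} For $\alpha=2$,
\[
\sum_{m,z}P_M P_{Z|M}^2/P_Z=\mathbb{E}_{M,Z}\big[P_{Z|M}/P_Z\big].
\]
The indicator equals one almost surely, so this sum reduces to
\[
\sum_w P(|E|=w)\,\frac{1}{\binom nw}\sum_{|E|=w}\frac{q^{n-w}|C_E|}{|C|},
\]
using that the pattern is radially distributed. Now apply the double counting
\[
\sum_{|E|=w}|C_E|=\sum_{x\in C}\binom{n-|x|}{w-|x|}=\sum_i A_i\binom{n-i}{n-w}.
\]
Here I used additivity, $A_i=|\{x\in C:|x|=i\}|$. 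This gives
\[
I_2^{\mathrm R}(M;Z)=\log\sum_{i}A_i\,g(i),\qquad g(i)=\sum_w \frac{P(|E|=w)\,q^{n-w}}{|C|\binom nw}\binom{n-i}{n-w}.
\]

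\emph{Step 3: conclusion.} With $j=n-w$, $g$ is a nonnegative combination of the generators $f_j(i)=\binom{n-i}{j}$. It is therefore completely monotonic, up to the factor $1/|C|$, which is common to all codes of the given size. A UO code minimizes $\sum_i A_i f_j(i)$ for each $j$, hence minimizes $\sum_i A_i g(i)$ and the logarithm.

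One subtlety: the competitor codes $C'$ in the minimization must also be additive, since the scheme is a coset scheme. For an additive code the distance distribution coincides with the weight distribution, so the comparison is valid. Universal optimality is defined over all codes, so it covers in particular all additive codes of the same size.

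The main obstacle is Step 1. One has to check carefully that $P_Z$ factors as $P_E$ times a uniform distribution even though $E$ may be chosen by the channel independently of $X$. I will assume that the erasure pattern is independent of the input, as in $\qec(p)$ and in the uniformly placed adversarial model. One also has to check that the projected coset has size exactly $|C|/|C_E|$. The remaining steps are routine counting.
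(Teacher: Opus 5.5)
Your proof is correct and follows essentially the same route as the paper. Both arguments express
$I_2^{\mathrm R}(M;Z)$ as $\log\sum_w \frac{q^{n-w}P(|E|=w)}{|C|\binom{n}{w}}\sum_i\binom{n-i}{n-w}A_i$ and then invoke the binomial-moment characterization of UO codes, under the same implicit assumption that $E$ is independent of $X$. The paper reduces to the zero coset by shift invariance and cites the identity $\sum_{|z|=w}\Gamma(z)^2=|C|\sum_i\binom{n-i}{n-w}A_i$ from Ashikhmin--Barg. You instead note that the likelihood ratio equals the constant $q^{n-w}|C_E|/|C|$ on its support and then double count $|C_E|$, which amounts to an inline proof of that identity for additive codes.
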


\begin{proof}

Our goal is to obtain a representation for $I_2^{\mathrm{R}}(M;Z)$ using an energy form relying on a completely monotonic function. 

Let $Z = (E,S)$ with $|E|$ distributed according to some distribution $g$. First, we obtain a simple representation for $P_{Z|M}(z \mid m)$ which is useful in the subsequent calculations. 
\begin{align*}
P_{Z|M}(z \mid m)
    &= P_{E,S|M}(e,s \mid m) \\
    &= P_{S|E,M}(s \mid m,e)\, P_{E|M}(e \mid m) \\
    &= \frac{1}{|C|} \sum_{c \in C} \1\!\left\{ (c + c_m)|_{e^\complement} = s \right\}\frac{g(|e|)}{\binom{n}{|e|}} , 
\end{align*}
where the final line is due to $P_{E|M}(e \mid m) = P_E(e) = \frac{g(|e|)}{\binom{n}{|e|}}$.



To further simplify the representation, we introduce the following notation. For $z = (e,s)$,
\[
    \Gamma(z) := \Gamma(e,s)
    := \sum_{c \in C} \1\!\left\{ c|_{e^\complement} = s \right\},
\]
which counts the number of codewords whose nonerased coordinates agree with the observation $s$.
So we have 
\begin{align}
    P_{Z|M}((e,s) \mid m) = \frac{1}{|C|}\Gamma(e,s-c_m|_{e^\complement})\frac{g(|e|)}{\binom{n}{|e|}},
\end{align}
and as a consequence, we have the following `shift' invariance
\begin{align}
    P_{Z|M}((e,s) \mid m)
    = P_{Z|M}((e, s-c_m|_{e^\complement}) \mid m_0),
\end{align}
where $m_0$ denotes the message corresponding to the identity coset i.e. the code $C$. 

Having this in mind, let us evaluate $I_2^{\mathrm{R}}(M;Z)$.
\begin{align*}
   I_2^{\mathrm{R}}(M;Z) 
   &= \log \sum_{m,z} \frac{P_{MZ}(m,z)^2}{P_M(m)P_Z(z)}\\
  & = \log \sum_m P_M(m) \sum_{z} \frac{P_{Z|M}(z|m)^2}{P_Z(z)}. 
\end{align*}
Observe that for fixed $z = (e,s)$, $$P_Z(e,s) = P_{S|E}(s|e)P_E(e) = \frac{1}{q^{n-|e|}}\frac{g(|e|)}{\binom{n}{|e|}},$$ which only depends on $|e|$. Continuing the calculation,
\begin{align}\label{eq: temp}
    I_2^{\mathrm{R}}&(M;Z) 
    = \log \sum_m P_M(m) \sum_{w=0}^n \sum_{z: |z|=w} \frac{P_{Z|M}(z|m)^2}{P_Z(z)}\nonumber\\
    &= \log \sum_{w=0}^n \frac{q^{n-w}\binom{n}{w}}{g(w)} \sum_m P_M(m) \sum_{z: |z|=w} P_{Z|M}(z|m)^2\nonumber\\
    &= \log \sum_{w=0}^n \frac{q^{n-w}\binom{n}{w}}{g(w)} \sum_{z: |z|=w} P_{Z|M}(z|m_0)^2.
\end{align}

Since for $|z|=w$, $P_{Z|M}(z \mid m_0) = \frac{\Gamma(z)\, g(w)}{|C|\binom{n}{w}}$, we obtain
\begin{align}\label{eq: ssum}
    I_2^{\mathrm{R}}(M;Z) = \log \sum_{w=0}^n \frac{q^{n-w}g(w)}{\binom{n}{w}|C|^2}\sum_{z:|z|=w} \Gamma(z)^2.
\end{align}

A straightforward calculation, e.g., \cite[Lemma 1]{ashikhmin1999binomial}, yields
   $$\sum_{z:|z|=w} \Gamma(z)^2 = |C| \sum_{i=0}^n
      \binom{n - i}{n - w}\, A_i.
      $$

Substituting this back into \eqref{eq: ssum} yields
\begin{align*}
    I_2^{\mathrm{R}}(M;Z) = \log \sum_{w=0}^n \frac{q^{n-w}g(w)}{\binom{n}{w}|C|} \sum_{i=0}^n
      \binom{n - i}{n - w}\, A_i
\end{align*}

Observe that for each $i$, the inner sum corresponds to energy functionals involving
completely monotonic potentials, as characterized in
\eqref{eq: cm_basis}.
Therefore, $I_2^{\mathrm{R}}(M;Z)$ is minimized
by coset coding schemes based on UO codes.
\end{proof}
\vspace*{-.1in}
We now turn to the Arimoto mutual information.
Recall that $I_2^{\mathrm{A}}(Z;M)$ can be written as
\begin{align*}
    I_2^{\mathrm{A}}&(Z;M)
    = H_2(Z) - H_2^{\mathrm{A}}(Z|M) \\
    &= H_2(Z)
       + 2 \log \sum_{m}
       \biggl(
           \sum_{z}
           \bigl(P_M(m)\, P_{Z|M}(z|m)\bigr)^2
       \biggr)^{1/2}.
\end{align*}
Since $P_M$ is uniform and the channel is symmetric across messages, the
inner expression is identical for all $m$, yielding
\begin{align*}
    I_2^{\mathrm{A}}(Z;M)
    = H_2(Z)
      + \log \sum_{z} P_{Z|M}(z|m_0)^2,
\end{align*}
for any fixed message $m_0 \in \sM$.

Using the same arguments as in the Rényi case, the quantity
$\sum_{z} P_{Z|M}(z|m_0)^2$ admits an energy representation in terms of a
completely monotonic potential.
Consequently, UO codes minimize
$I_2^{\mathrm{A}}(Z;M)$ as well.

Finally, we consider the Sibson mutual information.
In this setting, it suffices to show that
\[
I_\alpha^{\mathrm{S}}(Z;M) = I_\alpha^{\mathrm{R}}(Z;M).
\]

To this end, we show that the optimizer $\widetilde{Q}_M^\star$ in the
variational representation of Sibson mutual information coincides with the
uniform distribution over the message set.
We have \eqref{eq:Sibson-opt}
\begin{align}\label{eq:Qm prop}
    \widetilde{Q}_M^\star(m)
    &\propto
    \Biggl(
        \sum_{z}
        \frac{P_M(m)^{\alpha}}{P_Z(z)^{\alpha-1}}
        \, P_{Z|M}(z|m)^{\alpha}
    \Biggr)^{1/\alpha}.
\end{align}

Following steps analogous to those leading to \eqref{eq: temp}, and using 
shift-invariance induced by the coset structure, we simplify the
right-hand side of \eqref{eq:Qm prop} as
\begin{align*}
    \Biggl(
        \sum_{w=0}^{n}
        \Biggl(
            \frac{|C|}{q^{n}}
            \frac{q^{\,n-w}\binom{n}{w}}{g(w)}
        \Biggr)^{\alpha-1}
        \sum_{z:\,|z|=w}
        P_{Z|M}(z|m_0)^{\alpha}
    \Biggr)^{1/\alpha},
\end{align*}
where $m_0$ is an arbitrary but fixed message.
The resulting expression is independent of $m$.
Consequently, $\widetilde{Q}_M^\star$ is uniform over the message set and
coincides with $P_M$.

Therefore, the minimizing distribution in the definition of
$I_\alpha^{\mathrm{S}}(Z;M)$ equals the true marginal, and we conclude that
\[
I_2^{\mathrm{S}}(Z;M) = I_2^{\mathrm{R}}(Z;M).
\]

This completes the proof of \cref{thm: erasure}.

\begin{remark}
    We note that our analysis does not cover the secrecy measures
$I_\alpha^{\mathrm{S}}(M;Z)$ and $I_\alpha^{\mathrm{A}}(M;Z)$.
Although these two quantities coincide in the present setting (since $P_M$
is uniform), they do not appear to coincide with the other dependence
measures considered above.

\end{remark}

\section{Discussion and open problems}

Even though UO codes provide strong optimality guarantees established here, only a small number of code families are currently known to fall in this class. Identifying additional codes with this property remains an important open problem.

Moreover, while the results show that these codes achieve optimal secrecy for the channels and secrecy measures considered, it is unclear whether this condition is also necessary. Determining whether optimal secrecy can be attained by codes other than UO is an interesting and nontrivial question.

Finally, it is natural to ask whether the present approach extends to secrecy measures of other orders. Our analysis relies critically on the appearance of the distance distribution, which enables a characterization in terms of completely monotonic potentials. Whether similar structural reductions are possible for other orders of mutual information remains open. In particular, prior work has used bounds to control Shannon information leakage \cite{LDPCarXiv2025} using 2-\Renyi information leakage; understanding whether such arguments can be strengthened to yield optimality results directly for Shannon mutual information is an important direction for future research.


Another promising direction is the use of universally optimal lattices \cite{Cohn2022} for Gaussian wiretap channels. We expect that these lattices achieve universal secrecy guarantees in the Gaussian setting, mirroring the role played by universally optimal codes in discrete wiretap channels.

{\sc Acknowledgment:} AB was partially supported by NSF grants CCF2330909 and CCF2526035.



\end{document}